\newtheorem{theorem}{\bf Theorem}
\newtheorem{definition}{\bf Definition}
\begin{document}

\title{ A Distributed Merge and Split Algorithm for\\ Fair Cooperation in Wireless Networks}

\author{\authorblockN{ Walid Saad$^1$, Zhu Han$^2$, M{\'e}rouane Debbah$^3$, and Are Hj{\o}rungnes$^1$ }
\authorblockA{\small
$^1$UniK - University Graduate Center, University of Oslo, Kjeller, Norway, Email: \url{{saad, arehj}@unik.no}\\
$^2$ Electrical and Computer Engineering Department, Boise State
University, Boise, USA, Email:\url{zhuhan@boisestate.edu}\\
$^3$ Alcatel-Lucent Chair in Flexible Radio,  SUP{\'E}LEC, Gif-sur-Yvette, France, Email: \url{merouane.debbah@supelec.fr}\vspace{-0.75cm}
 }%
   \thanks{This work was supported by the Research Council of Norway
    through the project 183311/S10 entitled "Mobile-to-Mobile Communication Systems (M2M)", the AURORA project 18778/V11 entitled "Communication under uncertain topologies" and the OptiMo project 176773/S10 entitled "Optimized Heterogeneous Multiuser MIMO Networks".}}

\date{}

\maketitle

\begin{abstract}
This paper introduces a novel concept from coalitional game theory which allows the dynamic formation of coalitions among wireless nodes. A simple and distributed merge and split algorithm for coalition formation is constructed. This algorithm is applied to study the gains resulting from the cooperation among single antenna transmitters for virtual MIMO formation. The aim is to find an ultimate transmitters coalition structure that allows cooperating users to maximize their utilities while accounting for the cost of coalition formation. Through this novel game theoretical framework, the wireless network transmitters are able to self-organize and form a structured network composed of disjoint stable coalitions. Simulation results show that the proposed algorithm can improve the average individual user utility by 26.4\% as well as cope with the mobility of the distributed users.
\end{abstract}

\section{Introduction}
Recently, cooperation between mobile devices has been one of the main
activities of research work which tackled different aspects of cooperation at
different layers. For instance, the problem of cooperation among single antenna
receivers for virtual MIMO formation has been studied in
\cite{MA02} using coalitional game theory. The authors in
\cite{MA02} proved that for the receiver coalition game in a
Gaussian interference channel and synchronous CDMA multiple access
channel (MAC), a stable grand coalition of all users can
form if \emph{no cost} for cooperation is considered. Cooperation among single antenna transmitters and receivers in ad
hoc networks has also been studied in \cite{GC01}. The authors inspected the capacity gains that users can achieve
while cooperating; namely at the transmitter side. Cooperation in wireless networks was also exploited at higher
layers such as the network layer using game theory. For instance, cooperation in routing protocols was tackled in~\cite{RO00} where selfish nodes are allowed to cooperate for reducing the routing energy cost. In \cite{RO01} and \cite{Ibrahim},
the nodes along the route can cooperate with each other in order
to improve the link quality or reduce the power consumption.
Cooperation for packet forwarding is studied in \cite{BC00} and
\cite{BC01} using cooperative game theory, repeated game
theory, and machine learning. Auction theory is used in \cite{REL00} for
optimizing relay resource allocation in cooperative networks.
Thus, previous work mainly focused on the study of the gains resulting from virtual MIMO as well as from higher layer cooperation. For virtual MIMO,
the usage of coalitional games was limited to the study of the formation of the grand coalition when the users cooperate without cost. However, due to cooperation costs, it might not be beneficial for users located far away from each other to cooperate.

The main contribution of this article is to derive a fair cooperation strategy among distributed single antenna transmitters, which will allow these users to self-organize into structured coalitions while maximizing their utilities with cost. For this purpose, we construct a coalition formation algorithm
based on a novel concept from coalitional game theory which, to our knowledge, have not been used in the wireless world yet. A simple and distributed merge and split algorithm is devised for forming the coalitions. Through this algorithm, we seek to find the possible coalition structures in the transmitter cooperation game.  The convergence of the algorithm is proven and the stability of the resulting partitions is investigated. Different fairness criteria for dividing the extra benefits among the coalition users are also discussed.

The rest of this paper is organized as follows: Section~\ref{sec:prob} presents
the system model. In Section
\ref{sec:form}, we present the proposed algorithm, prove its properties and discuss the
fairness criteria. Simulation results are presented and analyzed
in Section \ref{sec:sim}. Finally, conclusions are drawn in
Section \ref{sec:conc}.

\section{System Model} \label{sec:prob}
Consider a network having $M_{t}$ single antenna transmitters, e.g.~mobile users, sending data in the uplink to a fixed receiver, e.g.~a base station, with $M_{r}$ receivers (Multiple Access Channel). Denote $N=\{1\ldots M_{t}\}$ as the set of all $M_{t}$ users in the network, and let  $S \subset N$ be a coalition consisting of
$|S|$ users. We consider a TDMA transmission in the network, thus,  in a non-cooperative manner, the $M_t$ users require a time scale of $M_t$ slots since every user occupies one slot. When cooperating, the single antenna transmitters form different disjoint coalitions and the coalitions will subsequently transmit in a TDMA manner, that is one coalition per transmission. Thus, during the time scale $M_t$, each coalition is able to transmit within all the time slots previously held by its users.
In addition, similar to \cite{GC01} we define a fixed transmitting power constraint $\tilde{P}$ per time slot (i.e.~per coalition) which is the power available for all transmitters that will occupy this slot. If a coalition (viewed as a single user MIMO after cooperation) will occupy the slot, part of the transmit power constraint will be used for actual transmission while the other part will constitute a cost for the exchange of information between the members of the coalition. This cost is taken as the sum of the powers required by \emph{each} user in a coalition $S$ to broadcast to its corresponding farthest user inside $S$. For instance, the power needed for broadcast transmission between a user $i \in S$ and its corresponding farthest user $\hat{i} \in S$ is given by
\begin{equation}\label{eq:scost}
\bar{P}_{i,\hat{i}} = \frac{\nu_0 \cdot \sigma^{2}}{h^{2}_{i,\hat{i}}} ,
\end{equation}
where $\nu_0$ is a target SNR for information exchange, $\sigma^{2}$ is the noise variance and $h_{i,\hat{i}}=\sqrt{\kappa/d_{i,\hat{i}}^{\alpha}}$ is the path loss between users $i$ and $\hat{i}$; $\kappa$ being the path loss constant, $\alpha$ the path loss exponent and $d_{i,\hat{i}}$ the distance between users $i$ and $\hat{i}$. In consequence, the total power cost for a coalition $S$ having $|S|$ users is given by $\hat{P}_{S}$ as follows
\begin{equation}\label{eq:cost}
\hat{P}_{S} = \sum_{i=1}^{|S|}\bar{P}_{i,\hat{i}}.
\end{equation}
It is interesting to note that the defined cost depends on the location of the users and the size of the coalition; hence a higher power cost is incurred whenever the distance between the users increases or the coalition size increases. Thus, the actual power constraint per coalition $S$ is given by
\begin{equation}\label{eq:constr}
P_{S} = (\tilde{P} - \hat{P}_{S})^{+} , \textrm{with } a^{+} \triangleq \max{(a,0)}
\end{equation}

In the considered TDMA system, each coalition transmits in a time slot, hence, perceiving no interference from other coalitions during transmission.
Therefore, in a time slot, the sum-rate of the virtual MIMO system formed by a coalition $S$, assuming Gaussian signalling and under a power constraint $P_{S}$, is given
by \cite{TE00} as
\begin{equation}\label{eq:maxmutu}
C_{S}=\max_{\boldsymbol{Q}_{S}}I(\boldsymbol{x}_{S};\boldsymbol{y}_{S})  = \max_{\boldsymbol{Q}_{S}}\log{\operatorname{det}(\boldsymbol{I}_{M_r} + \boldsymbol{H}_{S}\cdot\boldsymbol{Q}_{S}\cdot \boldsymbol{H}_{S}^{\dag} )} ,
\end{equation}
where $\boldsymbol{x}_{S}$ and $\boldsymbol{y}_{S}$ are, respectively, the transmitted
and received signal vectors of coalition $S$, $\boldsymbol{Q}_{S} = \operatorname{E}{[\boldsymbol{x}_{\cal{S}}\cdot \boldsymbol{x}^{\dag}_{\cal{S}}]}$ is the covariance of $\boldsymbol{x}_{S}$ with  $\operatorname{tr}[\boldsymbol{Q}_{S}] \le P_{S}$ and $\boldsymbol{H}_{S}$ is the $M_r \times M_t$ channel matrix with $\boldsymbol{H}_{S}^{\dag}$ its conjugate transpose.

In this work, we consider a path-loss based  deterministic
channel matrix $\boldsymbol{H}_{S}$ assumed perfectly known at the
transmitter and receiver with each element $h_{i,j} =
\sqrt{\kappa/d_{i,j}^{\alpha}}$ with $d_{i,j}$ the distance between transmitter $i$ and receiver $j$. For such a channel, the work in \cite{TE00} shows that the maximizing input signal covariance $\boldsymbol{Q}_{S}$ is given by $\boldsymbol{Q}_{S} = \boldsymbol{V}_{S}\boldsymbol{D}_{S}\boldsymbol{V}_{S}^{\dag}; \textrm{ with } \operatorname{tr}[\boldsymbol{D}_{S}]  = \operatorname{tr}[\boldsymbol{Q}_{S}]$ where $\boldsymbol{V}_{S}$ is the unitary matrix given by the singular value decomposition of $\boldsymbol{H}_S=\boldsymbol{U}_S \boldsymbol{\Sigma}_S \boldsymbol{V}_S^{\dag}$. $\boldsymbol{D}_S$ is an $M_t \times M_t$ diagonal matrix given by $\boldsymbol{D}_S = \operatorname{diag}(D_1,\ldots,D_{K},0,\ldots,0)$ where $K\le\min{(M_r,M_t)}$ represents the number of positive singular values of $\boldsymbol{H}_{S}$ (eigenmodes) and each $D_i$ given by
\begin{equation}\label{eq:q}
   D_{i} = (\mu - \lambda_{i}^{-1})^{+}.
   \end{equation}
$\mu$ is determined by water-filling to satisfy the coalition power
constraint $  \operatorname{tr}[\boldsymbol{Q}_{S}] = \operatorname{tr}[\boldsymbol {D}_{S}]   = \sum_{i} D_{i} = P_{S}$  and  $\lambda_i$ represents the $i$th eigenvalue of $\boldsymbol{H}_{S}^{\dag}\boldsymbol{H}_{S}$. Hence, based on \cite{TE00}, the resulting capacity for a coalition $S$ is given by $C_{S} = \sum_{i=1}^{K} (\log{(\mu\lambda_i)})^{+}$.

Consequently, over the TDMA time scale of $M_t$, for every coalition $S\subset N$, we define the utility function as
\begin{equation}\label{eq:utility}
v(S) = \begin{cases} |S| \cdot C_{S}, & \mbox{if } P_{S} > 0,\\ 0, &
\mbox{otherwise}. \end{cases}
\end{equation}
This utility represents the total capacity achieved by coalition $S$ during the time scale $M_t$ while accounting for the cost through the power constraint. The
second case states that if the power cost within a coalition is larger than or equal the constraint, then the
coalition cannot form. Thus, we have a coalitional game ($N$,$v$) with a transferable utility and we seek, through coalition formation, a coalition structure that will allow the users to maximize their utilities in terms of rate with cost in terms of power.

\section{Coalition Formation} \label{sec:form}
\subsection{Coalition Formation Algorithm}
Unlike existing literature, in the proposed transmitter cooperation ($N$,$v$) game, we will prove that the
grand coalition cannot form due to cost.
\begin{definition}
A coalitional game $(N,v)$ with a transferable utility is said to be superadditive if for any two disjoint coalitions $S_{1}, S_{2}
\subset N$, $v(S_{1} \bigcup S_{2}) \ge v(S_{1}) + v(S_{2})$.
\end{definition}
\begin{theorem}The proposed transmitter ($N$,$v$) coalitional game with cost is, in general, non-superadditive.
\end{theorem}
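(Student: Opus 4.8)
The statement is a universally quantified inequality over all disjoint pairs of coalitions, so to show that it fails \emph{in general} it suffices to exhibit a single network configuration in which superadditivity is violated. The plan is therefore to construct an explicit counterexample, exploiting the fact noted after \eqref{eq:cost} that the cooperation cost grows without bound as the users move apart. I would choose the geometry so that merging two profitable coalitions destroys the available power, collapsing the merged utility to zero via the second branch of \eqref{eq:utility}.

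Concretely, I would take two disjoint singleton coalitions $S_1=\{i\}$ and $S_2=\{j\}$. A lone user needs no intra-coalition broadcast, so $\hat P_{S_1}=\hat P_{S_2}=0$, and by \eqref{eq:constr} each operates at the full budget $\tilde P$; its point-to-point link to the receiver yields a strictly positive capacity, so that $v(S_1)>0$ and $v(S_2)>0$ by \eqref{eq:utility}. Upon merging, however, the farthest partner of each of the two users is the other one, so \eqref{eq:scost}--\eqref{eq:cost} give $\hat P_{S_1\cup S_2}=2\bar P_{i,j}=2\nu_0\sigma^2 d_{i,j}^{\alpha}/\kappa$, which scales like $d_{i,j}^{\alpha}$ and is hence unbounded in the separation $d_{i,j}$.

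The decisive step is then to fix the geometry so that $\hat P_{S_1\cup S_2}\ge\tilde P$: for any fixed budget $\tilde P$ this holds once $d_{i,j}$ is taken large enough. By \eqref{eq:constr} this forces $P_{S_1\cup S_2}=0$, and the second case of \eqref{eq:utility} then gives $v(S_1\cup S_2)=0$. We thus obtain $v(S_1\cup S_2)=0<v(S_1)+v(S_2)$, which contradicts the superadditivity inequality of Definition~1 and establishes the claim.

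The only point requiring care — and the natural place for the argument to be challenged — is verifying that the individual utilities stay bounded away from zero while the merged utility collapses; once the separation is large the cost term dominates the fixed budget, so the collapse is robust. The same construction carries over verbatim to two internally compact but mutually distant multi-user coalitions rather than singletons, which confirms that the failure of superadditivity is a structural consequence of the distance-dependent cost and not an artifact of the singleton choice.
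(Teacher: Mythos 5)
Your proof is correct and follows essentially the same route as the paper's: place the two coalitions far enough apart that the cooperation cost of \eqref{eq:cost} exceeds the budget $\tilde P$, so that \eqref{eq:constr} and \eqref{eq:utility} force $v(S_1\cup S_2)=0<v(S_1)+v(S_2)$. Your choice of singletons is a slight refinement, since it guarantees $\hat P_{S_1}=\hat P_{S_2}=0$ and hence that the right-hand side is strictly positive, a point the paper's proof leaves implicit.
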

\begin{proof}
Consider two disjoint coalitions $S_1 \subset N$  and $S_2 \subset N$ in the network, with the users of $S_1 \bigcup S_2$ located far enough to yield a power cost per (\ref{eq:cost}) $\hat{P}_{S_1 \bigcup S_1} > \tilde{P}$. Therefore, by (\ref{eq:constr}) $P_{S_1 \bigcup S_2} =0$ yielding $v(S_1 \bigcup S_2) = 0 < v(S_1) + v(S_2)$ (\ref{eq:utility}); hence the game is not superadditive.
\end{proof}
\begin{definition}
A payoff vector $\boldsymbol{z} = (z_{1},\ldots,z_{M_t})$ is said
to be \emph{group rational} or efficient if $\sum_{i=1}^{M_t}z_{i} = v(N)$. A
payoff vector $\boldsymbol{z}$ is said to be \emph{individually rational} if the
player can obtain the benefit no less than acting alone, i.e.
$z_i \ge v(\{i\}) \forall i$. An \emph{imputation} is a payoff vector satisfying
the above two conditions.
\end{definition}
\begin{definition}
An imputation $\boldsymbol{z}$ is said to be unstable through a
coalition $S$ if $v(S)$$>\sum_{i\in S}z_i$, i.e., the players have
incentive to form coalition $S$ and reject the proposed $\boldsymbol{z}$.
The set $\mathcal{C}$ of stable imputations is called the {\em core}, i.e.,
\begin{equation}
\mathcal{C}=\left\{\boldsymbol{z}:\sum_{i \in N }z_i=v(N) \mbox{ and
} \sum_{i\in S}z_i\geq v(S)\ \forall S\subset N\right\}.
\end{equation}
A non-empty core means that the players have an incentive to form the grand coalition.
\end{definition}

\begin{theorem}In general, the core of the proposed ($N$,$v$) coalitional game is empty.
\end{theorem}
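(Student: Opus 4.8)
The plan is to establish emptiness by exhibiting a single user configuration for which the core constraints are mutually inconsistent, leveraging the non-superadditivity already proven in Theorem~1. Since the claim is only that the core is empty \emph{in general}, one concrete counterexample suffices rather than a statement for all networks.

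First I would fix the users of $N$ so that the grand coalition is spread out enough that its cooperation cost exceeds the per-slot budget, i.e.~$\hat{P}_{N} \ge \tilde{P}$. By the definition of the actual power constraint in \eqref{eq:constr} this forces $P_{N} = 0$, and hence by the utility definition in \eqref{eq:utility} we obtain $v(N) = 0$. Such a configuration genuinely exists because $\hat{P}_{N} = \sum_{i} \bar{P}_{i,\hat{i}}$ grows without bound as the inter-user distances increase (each term scales like $d_{i,\hat{i}}^{\alpha}$), which is precisely the mechanism underlying Theorem~1.

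Next I would contrast this with the singleton values. A lone user incurs no broadcast cost to any partner, so $\hat{P}_{\{i\}} = 0$, giving $P_{\{i\}} = \tilde{P} > 0$ and therefore $v(\{i\}) = C_{\{i\}} > 0$ for every $i \in N$. Now recall that any element of the core is in particular an imputation, so it must simultaneously satisfy group rationality, $\sum_{i \in N} z_{i} = v(N) = 0$, and individual rationality, which is exactly the core constraint $\sum_{i \in S} z_{i} \ge v(S)$ specialized to $S = \{i\}$, namely $z_{i} \ge v(\{i\}) > 0$.

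The contradiction then follows immediately: summing the individual-rationality inequalities over all users yields $\sum_{i \in N} z_{i} \ge \sum_{i \in N} v(\{i\}) > 0$, which is incompatible with the efficiency requirement $\sum_{i \in N} z_{i} = 0$. Hence no feasible $\boldsymbol{z}$ exists and $\mathcal{C} = \emptyset$ for this configuration. The only point requiring care—and the mild obstacle—is arguing that a configuration with $v(N)=0$ but strictly positive singleton values genuinely arises under the model; this is exactly what the cost structure guarantees, so I expect it to be routine rather than the crux. The conceptual heart is simply that non-superadditivity at the level of the grand coalition pits efficiency against individual rationality, and it is this tension that empties the core.
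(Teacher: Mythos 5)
Your proposal is correct and follows essentially the same route as the paper: both exhibit a configuration where the grand coalition's cooperation cost forces $v(N)=0$ while strict sub-coalitions (the paper uses two disjoint coalitions $S_1,S_2$, you use singletons) retain positive value, so efficiency and the core inequalities are mutually inconsistent. Your write-up is simply a more explicit version of the paper's terser argument, spelling out the summation contradiction that the paper leaves implicit.
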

\begin{proof}
Similarly to the proof of Theorem 1, consider a TDMA network composed of only two disjoint coalitions $S_1$ and $S_2$ with $v(S_1 \bigcup S_2 = N) = 0$. In this case,
no imputation can be found to lie in the core, since the value of the grand coalition is $v(N)=0$. Thus, in such a case, $S_1$ and $S_2$ will
have a better performance in a non-cooperative mode and the core of the transmitter cooperation ($N$,$v$) game is empty.
\end{proof}

As a result of the non-superadditivity of the game and the emptiness of the core, the grand coalition does
\emph{not} form among cooperating transmitters. Instead, independent disjoint coalitions will form in the network.
Therefore, we seek a novel algorithm for coalition formation that accounts for the properties
of the transmitter cooperation game with cost.

An interesting approach for coalition formation
through simple merge and split operations is given by \cite{KA00}. We define a \emph{collection} of coalitions $S$ in the grand coalition $N$ as the family $S = \{S_{1},\ldots,S_{l}\}$ of mutually disjoint coalitions $S_{i}$ of $N$. In other words, a collection is any arbitrary group of disjoint coalitions $S_i$ of $N$ not necessarily spanning all players of $N$. In addition, a collection $S$ of coalitions encompassing all the players of $N$, that is $\bigcup_{j=1}^{l} S_j = N$ is called a \emph{partition} of $N$. Moreover, the merge and split rules defined in \cite{KA00} are simple operations that allow to modify a partition $T$ of $N$ as follows
\begin{itemize}
\item \textbf{Merge Rule:} Merge any set of coalitions
$\{S_{1},\ldots,S_{k}\}$ where $\sum_{j=1}^{k} v(S_{j}) <
v(\bigcup_{j=1}^{k}S_{j})$; thus $\{S_{1},\ldots,S_{k}\}
\rightarrow \bigcup_{j=1}^{k}S_{j}$.

\item \textbf{Split Rule:} Split any set of coalitions
$\bigcup_{j=1}^{k}S_{j}$ where $\sum_{j=1}^{k} v(S_{j}) >
v(\bigcup_{j=1}^{k}S_{j})$; thus $\bigcup_{j=1}^{k}S_{j}
\rightarrow \{S_{1},\ldots,S_{k}\}$.
\end{itemize}
As a result, a group of coalitions (or users) decides to merge if it is able to improve its total utility through the merge; while
a coalition splits into smaller coalitions if it is able to improve the total utility.  Moreover, it is proven in \cite{KA00} that any iteration of  successive arbitrary merge and split operations \emph{terminates}.

A coalition formation algorithm based on merge and split can be formulated for wireless networks. For instance, for the transmitter cooperation game, each stage of our coalition formation algorithm will run in two consecutive phases shown in Table~\ref{tab:tableAlgo}: adaptive coalition formation, and then transmission. During the coalition formation phase, the users form coalitions through an iteration of arbitrary merge and split rules repeated until termination. Following the self organization of the network into coalitions, TDMA transmission takes place with each coalition transmitting in its allotted slots. Subsequently, the transmission phase may occur several times prior to the repetition of the coalition formation phase, notably in low mobility environments where changes in the coalition structure due to mobility are seldom.

\subsection{Stability Notions}\label{sec:stab}
The work done in \cite{KA00} studies the stability of a partition through the concept of defection function.
\begin{table}
\caption{One stage of the proposed merge and split algorithm}
\begin{center}
\begin{tabular}{|l|}
  \hline
  Step 1: Coalition Formation Phase: Arbitrary Merge and Split Rules\\
  \hline
  Step 2: Transmission Phase: One Coalition per Slot\\
  \hline

\end{tabular}\label{tab:tableAlgo}
\end{center}
\end{table}
\begin{definition}
A \emph{defection} function $\mathbb{D}$ is a function which
associates with any arbitrary partition $T=\{T_1,\ldots,T_l\}$ (each $T_i$ is a coalition) of the players set $N$ a family (i.e.~group) of
collections in $N$.
\end{definition}

Two important defection functions can be pinpointed. First, the $\mathbb{D}_{hp}(T)$ function ($\mathbb{D}_{hp}$) which associates with each partition $T$ of $N$
the family of all partitions of $N$ that the players can form through merge and split operations applied to $T$. This function allows any group of players to
leave the partition $T$ of $N$ through \emph{merge and split} operations to create another \emph{partition} in $N$. Second, the $\mathbb{D}_{c}(T)$ function
($\mathbb{D}_{c}$) which associates with each partition $T$ of $N$ the family of all collections in $N$. This function allows any group of players to leave
the partition $T$ of $N$ through \emph{any} operation and create an arbitrary \emph{collection} in $N$. Two forms of stability stem from these definitions: $\mathbb{D}_{hp}$ stability and a stronger strict $\mathbb{D}_{c}$ stability. In fact, a partition $T$ is $\mathbb{D}_{hp}$-stable, if no players in $T$ are interested in leaving $T$ through merge and split to form other partitions in $N$; while a partition $T$ is strictly $\mathbb{D}_{c}$-stable, if no players in $T$ are interested in leaving $T$ to form other collections in $N$ (not necessarily by merge and split).  
\begin{theorem}
Every partition resulting from our proposed merge and split algorithm is $\mathbb{D}_{hp}$-stable.
\end{theorem}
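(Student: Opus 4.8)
The plan is to show that the partition produced by the merge-and-split algorithm cannot be "defected from" via the $\mathbb{D}_{hp}$ function, i.e., that once the algorithm terminates, no further merge or split operation can improve the partition. The key observation is that $\mathbb{D}_{hp}$-stability is, essentially by definition, the statement that the final partition is a fixed point of the merge-and-split dynamics: a partition $T$ is $\mathbb{D}_{hp}$-stable precisely when none of the partitions reachable from $T$ by merge and split operations are preferred, which is exactly the condition that no merge or split rule applies to $T$.

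**First I would** invoke the termination result already cited from \cite{KA00}, which guarantees that any sequence of arbitrary merge and split operations terminates after finitely many steps. Let $T^{\ast}$ denote the partition at which the algorithm halts. By the definition of the merge and split rules in Section~\ref{sec:form}, termination means that no merge rule and no split rule is applicable to $T^{\ast}$: there is no set of coalitions in $T^{\ast}$ whose merger strictly increases total utility (otherwise the merge rule would fire), and there is no coalition in $T^{\ast}$ that can be split into sub-coalitions yielding a strictly higher total utility (otherwise the split rule would fire). I would state this carefully, since it is the crux: termination is equivalent to the non-applicability of both rules.

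**Next I would** connect this fixed-point property to the formal definition of $\mathbb{D}_{hp}$-stability. Recall that $\mathbb{D}_{hp}(T^{\ast})$ is the family of all partitions of $N$ reachable from $T^{\ast}$ through merge and split operations. A partition is $\mathbb{D}_{hp}$-stable when no group of players prefers to leave $T^{\ast}$ via merge and split to form another partition. Since every such reachable partition is obtained by a (possibly empty) sequence of merge and split steps, and since the very first step of any nontrivial such sequence would require an applicable merge or split rule at $T^{\ast}$ — which we have just ruled out — no strictly improving partition is reachable. Hence no group of players has an incentive to defect from $T^{\ast}$ through merge and split, which is exactly $\mathbb{D}_{hp}$-stability.

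**The hard part will be** pinning down the precise correspondence between "the algorithm terminated" and "no element of $\mathbb{D}_{hp}(T^{\ast})$ is preferred to $T^{\ast}$," because $\mathbb{D}_{hp}(T^{\ast})$ contains partitions reached by \emph{sequences} of operations, not merely single merges or splits. The subtlety is that I must argue that the utility-improving character of merge and split makes the relation transitive along a sequence: if no single first step from $T^{\ast}$ improves the total utility, then no multi-step composition can improve it either, since each individual step in a reachable sequence must itself be an improving merge or split. I would therefore emphasize that the merge and split rules only fire on \emph{strict} improvements, so that $T^{\ast}$ being a local fixed point forces it to be a genuine $\mathbb{D}_{hp}$-stable point rather than merely resistant to single-step deviations. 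This monotonicity of total utility under the rules is what I expect to require the most care to state rigorously, and it is where I would lean on the properties of $\mathbb{D}_{hp}$ established in \cite{KA00}.
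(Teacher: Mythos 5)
Your proposal is correct and follows essentially the same route as the paper's own proof: invoke the termination result of the merge-and-split iteration from the cited reference, conclude that no merge or split rule is applicable to the final partition, and read off $\mathbb{D}_{hp}$-stability directly from its definition. The extra care you take about multi-step sequences in $\mathbb{D}_{hp}(T^{\ast})$ is a reasonable refinement, but the paper treats this as immediate and your argument does not diverge from it in substance.
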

\begin{proof}
A network partition $T$ resulting from the proposed merge and split algorithm can no longer be subject to any additional merge or split operations as successive iteration of these operations terminate \cite{KA00}. Therefore, the users in the final network partition $T$ cannot leave this partition through merge and split and the partition $T$ is immediately $\mathbb{D}_{hp}$-stable.
\end{proof}
Nevertheless, a stronger form of stability can be sought using strict $\mathbb{D}_{c}$-stability. The appeal of a strictly $\mathbb{D}_{c}$ stable partition is two fold \cite{KA00}: 1)~it is the unique outcome of any arbitrary iteration of merge and split operations done on any partition of $N$; 2)~it is a partition that maximizes the social wellfare which is the sum of the utilities of all coalitions in a partition. However, the existence of such a partition is not guaranteed. In fact, the authors in \cite{KA00} showed that a partition $T = \{T_{1},\ldots,T_{l}\}$ of the whole space $N$ is strictly $\mathbb{D}_{c}$-stable only if it can fulfill two necessary and sufficient conditions:
\begin{enumerate}
\item For each $i\in \{1,\ldots,l\}$ and each pair of disjoint \emph{coalitions}  $S_1$ and $S_2$ such that $\{S_1 \cup S_2\} \subset T_i$ we have
$v(S_1 \bigcup S_2) > v(S_1) + v(S_2)$.
\item For the partition $T=\{T_1,\ldots,T_l\}$ a coalition $G \subset N$ formed of players belonging to different $T_i \in T$ is $T$-incompatible, that is for no
$i \in \{1,\ldots,l\}$ we have $G\subset T_i$. Strict $\mathbb{D}_{c}$-stability requires that for  all T-incompatible coalitions $G$, $\sum_{i=1}^{l}v(T_{i} \cap G) > v(G)$.
\end{enumerate}
Therefore, in the case where a partition $T$ of $N$ satisfying the above two conditions exists; the proposed algorithm converges to this optimal strictly $\mathbb{D}_{c}$-stable partition since it constitutes a unique outcome of any arbitrary iteration of merge and split. However, if no such partition exists, the proposed algorithm yields a final network partition that is $\mathbb{D}_{hp}$-stable.
In the transmitter cooperation game, the first condition of $\mathbb{D}_{c}$-stability depends on the users location in the network due to cost of cooperation. In fact, it is well known \cite{TE00} that, in an ideal case with no cost for cooperation, as the number of transmit antennas are increased for a fixed power constraint, the overall system's diversity increases. In fact, consider a partition $T=\{T_1,\ldots,T_l\}$ of $N$, and any two disjoint coalitions $S_1$ and $S_2$ such that
$\{S_1 \cup S_2\} \subset T_i$. Assuming that no cost for cooperation exists, the capacity of the coalition $S_1 \bigcup S_2$, denoted $C_{S_1\bigcup S_2}$, is larger than the capacities $C_{S_1}$ and $C_{S_2}$ of the coalitions $S_1$ and $S_2$ acting non-cooperatively (due to the larger number of antennas in $S_1\bigcup S_2$); thus $|S_1 \bigcup S_2|\cdot C_{S_1\bigcup S_2} >|S_1 \bigcup S_2|\cdot \max{(C_{S_1},C_{S_2})}$ with $|S_1 \bigcup S_2|=|S_1|+|S_2|$. As a result $C_{S_1 \bigcup S_2}$ satisfies
\begin{equation}\label{eq:superconv}
\left|S_1 \bigcup S_2\right|\cdot C_{S_1\bigcup S_2} > |S_1|\cdot C_{S_1} + |S_2|\cdot C_{S_2}.
 \end{equation}
 In fact, (\ref{eq:superconv}) yields $v(S_1 \bigcup S_2) > v(S_1) + v(S_2)$ which is the necessary condition to verify the first $\mathbb{D}_{c}$-stability condition. However, due to the cost given by (\ref{eq:cost}) $C_{S_1\bigcup S_2}, C_{S_1} \textrm{ and } C_{S_2}$ can have different power constraints due to the power cost, i.e.~users location, and this condition is not always verified. Therefore, in practical networks, guaranteeing the first condition for existence of a strictly $\mathbb{D}_{c}$-stable partition is random due to the random location of the users. Furthermore, for a partition $T=\{T_1,\ldots,T_l\}$, the second condition of $\mathbb{D}_{c}$-stability is also dependent on the distance between the users in different coalitions $T_i \in T$. In fact, as previously defined, for a partition $T$ a T-incompatible coalition $G$ is a coalition formed out of users belonging to different $T_i \in T$. In order to always guarantee that $\sum_{i=1}^{l}v(T_{i} \cap G) > v(G)$ it suffice to have $v(G)=0$ for all T-incompatible coalitions $G$. In a network partition $T$ where the players belonging to different coalitions $T_i \in T$ are separated by large distances, any T-incompatible coalition $G$ will have $v(G) =0$ based on (\ref{eq:utility}) and, thus, satisfying the second $\mathbb{D}_c$-stability condition.

Finally, the proposed algorithm can be implemented in a
distributed way. Since the user can detect the strength of the
other users' uplink signals, the nearby users can be discovered.
By using a signalling channel, the distributed users can exchange some channel information and then perform the
merge and split algorithm. The signalling for this handshaking can be minimal.

\subsection{Fairness Criteria for Distributions within Coalition}\label{sec:div}
In this section, we present possible fairness criteria for dividing the coalition worth
among its members.
\subsubsection{Equal Share Fairness}
The most simple division method is to divide the \emph{extra} equally
among users. In other words, the utility of user $i$ among the
coalition $S$ is
\begin{equation}
z_i=\frac {1}{|S|} \left( v(S)-\sum_{j\in S} v(\{j\})\right)+v(\{i\}).
\end{equation}

\subsubsection{Proportional Fairness}

The equal share fairness is a very simple and strict fairness
criterion. However, in practice, the user experiencing a good channel
might not be willing to cooperate with a user under
bad channel conditions, if the extra is divided equally. To account
for the channel differences, we use another fairness criterion named
proportional fairness, in which the extra benefit is divided in weights
according to the users' non-cooperative utilities. In other words,
\begin{equation}
z_i=w_i \left( v(S)-\sum_{j\in S} v(\{j\})\right)+v(\{i\}),
\end{equation}
where $\sum_{i\in S} w_i=1$ and within the coalition
\begin{equation}\label{eq:int}
\frac {w_i}{w_j} = \frac {v(\{i\})}{v(\{j\})},
\end{equation}

\begin{figure}[!t]
\begin{center}
\includegraphics[angle=0,width=\columnwidth]{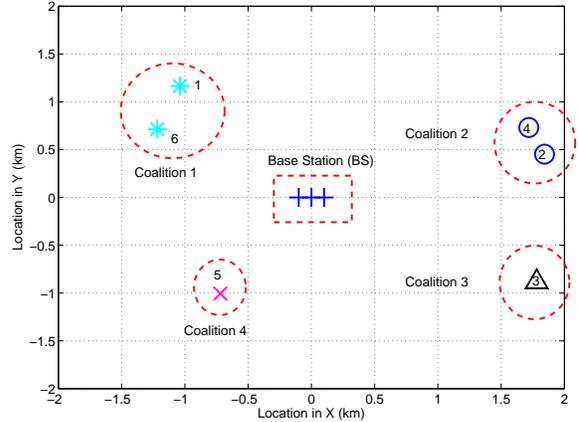}
\end{center}\vspace{-0.5cm}
\caption {A snapshot example of coalition formation.} \label{f:
snapshot}\vspace{3mm}
\end{figure}
\section{Simulation Results} \label{sec:sim}
For simulations, the base station (BS) is placed at the origin with $M_r = 3$ antennas, and random users are located within a square of $2$~km $\times$ $2$~km around the BS. The power constraint per slot is $\tilde{P}=0.01$~W, the SNR for information
exchange is $\nu_0=10$~dB and the noise level is $-90$~dBm. The propagation loss is $\alpha=3$ and $\kappa =1$.

In Fig.~\ref{f: snapshot}, we show a snapshot of a network with
$M_t = 6$ users. Using the proposed merge and split protocol, clusters of users are formed
for distributed closely located users. 
Moreover, the coalition structure in Fig.~\ref{f: snapshot} is strictly $\mathbb{D}_{c}$-stable, thus, it is the unique outcome of any iteration of
merge and split. The strict $\mathbb{D}_{c}$ stability of this structure is immediate since a partition verifying the two conditions of Section~\ref{sec:stab} exists. For the first condition, strict superadditivity within the coalitions is immediately verified by merge rule due to having two users per formed coalition. For the second condition, any T-incompatible coalition will have a utility of $0$ since the users in the different formed coalitions are too far to cooperate. For example,
consider the T-incompatible coalition $\{2,3\}$, the distance between users 3 and 2 is $1.33$~km yielding per (\ref{eq:scost}) and (\ref{eq:cost})
a power cost $\hat{P}_{\{2,3\}} = 0.052\textrm{~W} >\tilde{P}=0.01\textrm{~W}$ thus, by (\ref{eq:utility}) $v(\{2,3\})=0$. This result is easily verifiable for all T-incompatible coalitions.

In Fig.~\ref{f: mobility}, we show how the algorithm handles
mobility. The network setup of Fig.~\ref{f: snapshot} is used and User $6$ moving from the left to right for $2.8$~km. When User
$6$ moves to the right first, it becomes closer to the BS and its utility increases and so does the utility of User $1$. However, when the distance between Users $1$ and $6$ increases, the cost increases and both users' payoffs drop. As long as the distance covered by User $6$ is less than $0.6$~Km, the coalition of Users $1$ and $6$ can still bring mutual benefits to both users. After that, splitting occurs and User $1$ and User $6$ transmit independently. When User $6$ move about $1.2$~Km, it begins to distance itself from the BS and its utility begins to decrease. When User $6$ moves about $2.5$~km, it will be beneficial to users $2$, $4$ and $6$ to form a 3-user coalition through the merge rule since $v(\{2,4,6\}) = 10.8883 > v(\{2,4\}) + v(\{6\}) = 6.5145 +  3.1811  = 9.6956$. As User $6$ moves further away from the BS, User $2$ and User $4$'s utilities are improved within coalition $\{2,4,6\}$, while User $6$'s utility decreases slower than
prior to merge.

\begin{figure}
\begin{center}
\includegraphics[angle=0,width=\columnwidth]{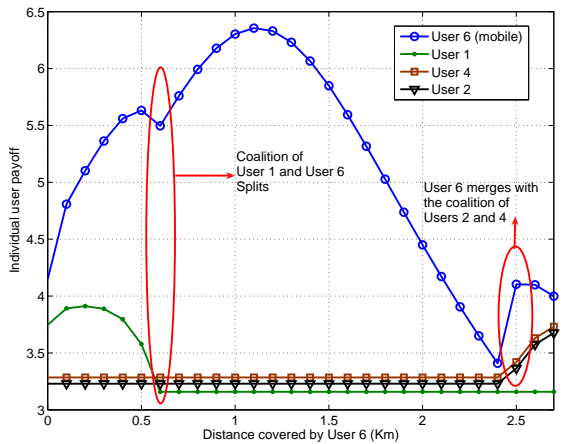}
\end{center}\vspace{-0.5cm}
\caption {Coalition merging/splitting due to mobility.} \label{f:
mobility}
\end{figure}

\begin{table}
\caption{Payoff division according to different fairness}
\begin{center}
\begin{tabular}{|l|c|c|c|}
  \hline
  &User 2&User 4&User 6\\
  \hline
  Equal division& 3.7310 & 3.6761  & 3.9993 \\
  \hline
  Proportional fair & 3.6155 & 3.6968  &4.0940\\

  \hline

\end{tabular}\label{t:pay}
\end{center}
\end{table}

Table~\ref{t:pay} shows the payoff division among
coalition users when the mobile User $6$ moves $2.7$~km in Fig.~\ref{f:
mobility}. In this case, we have $v(\{2\})=2.4422$, $v(\{4\})=2.4971$, $v(\{6\})=2.7654$ and $v(\{2,4,6\})=11.4063$.
Compared with the equal division, proportional fairness gives User $6$ the highest share of the extra benefit and User $2$ the lowest share because User $6$ has a higher non-cooperative utility than User $2$. Thus, Table~\ref{t:pay} shows how different fairness criteria can yield different apportioning of the extra cooperation benefits.

In Fig.~\ref{f: no_user}, we show the average individual user payoff
improvement as a function of the number of users in the networks.
Here we run the simulation for $10000$ different random locations.
For cooperation with coalitions, the average individual utility increases with
the number of users while for the non-cooperative approach an almost constant performance is noted. Cooperation
presents a clear performance advantage reaching up to 26.4\% improvement of the average user payoff at $M_t = 50$ as shown in Fig.~\ref{f: no_user}.
\section{Conclusions} \label{sec:conc}
In this paper, we construct a novel game theoretical algorithm
suitable for modeling distributed cooperation with cost among single antenna
users. Unlike existing literature which sought algorithms to form the grand coalition of transmitters; we inspected the
possibility of forming disjoint independent coalitions which can be characterized by novel stability notions from coalitional
game theory. We proposed a simple and distributed merge and split algorithm for forming coalitions and
benefiting from spatial gains. Various properties of the algorithm are exposed and proved. Simulation results show how the derived algorithm allows the network to self-organize while improving the average user payoff by 26.4\% and efficiently handling the distributed users' mobility.

\begin{figure}
\begin{center}
\includegraphics[angle=0,width=\columnwidth]{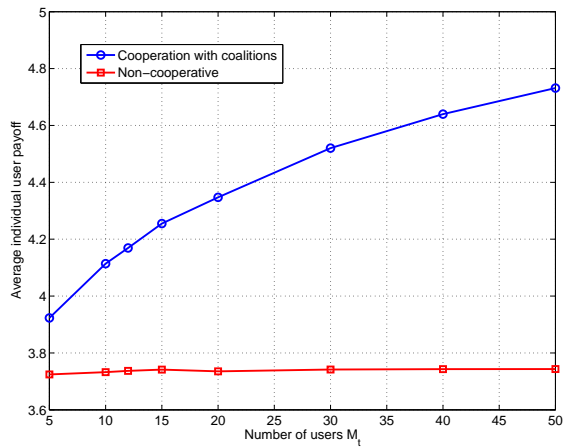}
\end{center}\vspace{-0.5cm}
\caption {Performance improvement.} \label{f: no_user}
\end{figure}
\bibliographystyle{IEEEtran}
\bibliography{references}
\end{document}